\documentclass{revtex4}
\usepackage{braket}
\usepackage{amsmath}
\usepackage{amssymb}
\usepackage{amsthm}
\usepackage{graphicx}
\usepackage{calrsfs}
\DeclareMathAlphabet{\pazocal}{OMS}{zplm}{m}{n}
\theoremstyle{definition}
\newtheorem*{theorem}{Theorem}
\newtheorem{corollary}{Corollary}
\newtheorem{example}{Example}
\newtheorem{remark}{Remark}
\parskip=3pt

\begin{document}
\title{Symmetric measurement-induced lower bounds of concurrence}
\author{Hao-Fan Wang}
\email{2230502117@cnu.edu.cn}
\affiliation{School of Mathematical Sciences, Capital Normal University, Beijing 100048, China}
\author{Shao-Ming Fei}
\email{feishm@cnu.edu.cn}
\affiliation{School of Mathematical Sciences, Capital Normal University, Beijing 100048, China}
\begin{abstract}
We provide a class of lower bounds for concurrence based on symmetric measurements. We show that our lower bounds estimate the quantum entanglement better than some existing lower bounds by detailed examples. Moreover, our lower bounds can be experimentally identified without state tomography.
\end{abstract}

\keywords{Concurrence \and Symmetric measurements \and trace norm \and quantum entanglement}

\maketitle

\section{Introduction}
An important issue in the theory of quantum entanglement is the quantification and estimation of entanglement for composite systems. The concurrence\cite{WOS:000171609900037} is one of the most widely used entanglement measure. Nevertheless, due to the infimum involved in the calculation, analytical formulas for concurrence are obtained only for some very specific quantum states. Consequently, much attention has been paid to the estimation of the lower bounds of concurrence for general bipartite states.

Lower bounds of concurrence may be derived from separability criteria such as the positive partial transpose (PPT) criterion\cite{WOS:A1996VC33600001,WOS:A1996VV48200001} and realignment criterion\cite{WOS:000183934000001,WOS:000241549700003}. In 2005, Chen et al.\cite{WOS:000230680000010} proposed a lower bound of concurrence based on these two criteria. After that, more lower bounds have been presented from different approaches\cite{WOS:000246890400058,WOS:000288785700014,WOS:000519613200001,WOS:000576674700009,WOS:000732436000001}. In 2021, Shi et al.\cite{WOS:000940186800002} provided a lower bound of concurrence that is more general than the lower bound from realignment criterion. The result has been further developed by Lu et al.\cite{lufei24} recently.

Quantum measurements, given by positive operator-valued measures (POVMs), are used to obtain information about quantum systems. Two well known POVMs are mutually unbiased measurements (MUMs)\cite{WOS:000338940100005} and general symmetric informationally complete positive operator-valued measure (GSIC POVM)\cite{WOS:000340207200009}, which are natural extensions of measurements based on mutually unbiased bases (MUBs)\cite{WOS:A1989AG05900008} and symmetric informationally complete positive operator-valued measure (SIC POVM)\cite{WOS:000221658500005}, respectively. In fact, both GSIC POVM and a complete set of MUMs are conical 2-designs\cite{WOS:000445721600003}. Recently, the symmetric measurement or $(N,M)$-POVM has been proposed\cite{WOS:000789327100005}, which reduces MUMs and GSIC POVM for special cases.

Separability criteria based on quantum measurements are of particular significance, as they are experimentally implemental. In 2015, Chen et al.\cite{WOS:000355264700038} provided a criterion based on GSIC POVM. In 2018, based on SIC POVM Shang et al.\cite{WOS:000441330400001} presented a criterion called ESIC criterion. Lai et al.\cite{WOS:000446569700001} generalized the ESIC criterion to the one based on GSIC POVM, and shew that this criterion is better than the criterion proposed by Chen et al. In 2023, Tang et al.\cite{WOS:000996643000001} extended this result to the one based on $(N,M)$-POVM.

In this paper, we provide analytical lower bounds of concurrence with respect to the separability criteria presented in \cite{WOS:000441330400001,WOS:000446569700001,WOS:000996643000001}.
They are experimentally implemental. Detailed examples are given to show the advantages of these lower bounds.

\section{Symmetric measurement based lower bounds of concurrence}

We first recall the $(N,M)$-POVM. A set of $N$ $d$-dimensional POVMs $\{E_{\alpha,k}|k=1,2\cdots,M\}$ ($\alpha=1,2,\cdots,N$) constitute an $(N,M)$-POVM if
 \begin{flalign*}
 	{\rm tr}(E_{\alpha,k}) &= \dfrac{d}{M}, \\
 	{\rm tr}(E_{\alpha,k}^{2}) &= x,\\
 	{\rm tr}(E_{\alpha,k}E_{\alpha,l}) &= \dfrac{d-Mx}{M(M-1)},~~ l\neq k\\
 	{\rm tr}(E_{\alpha,k}E_{\beta,l}) &= \dfrac{d}{M^{2}},~~ \beta\neq\alpha
 \end{flalign*}
where the parameter $x$ satisfies $\dfrac{d}{M^{2}}<x\leq \min\left\{\dfrac{d^{2}}{M^{2}},\dfrac{d}{M}\right\}$. When $N(M-1)=d^{2}-1$, the $(N,M)$-POVM is called an informationally complete $(N,M)$-POVM.
For any finite dimension $d$ ($d>2$), there exist at least four different types of informationally complete $(N,M)$-POVM: (1) $N=1$ and $M=d^{2}$ (GSIC POVM), (2) $N=d+1$ and $M=d$ (MUMs), (3) $N=d^{2}-1$ and $M=2$, (4) $N=d-1$ and $M=d+2$.

From orthonormal Hermitian operator bases $\{G_{0}=I_{d}/\sqrt{d},\,G_{\alpha,k}|\alpha=1,\cdots,N;\,k=1,\cdots,M-1\}$ with ${\rm tr}(G_{\alpha,k})=0$, an informationally complete $(N,M)$-POVM is given by
\begin{equation*}
E_{\alpha,k}=\dfrac{1}{M}I_{d}+tH_{\alpha,k},
\end{equation*}
 where
 \begin{equation*}
 	H_{\alpha,k}=\begin{cases}
 		G_{\alpha}-\sqrt{M}(\sqrt{M}+1)G_{\alpha,k},~ k=1,\cdots,M-1\\
 		(\sqrt{M}+1)G_{\alpha}, k=M
 	\end{cases}
 \end{equation*}
with $G_{\alpha}=\sum\limits_{k=1}^{M-1}G_{\alpha,k}$. The parameter $t$ should be chosen such that $E_{\alpha,k}\geq 0$, which implies that
 \begin{equation*}
 	-\dfrac{1}{M}\dfrac{1}{\lambda_{\max}}\leq t \leq \dfrac{1}{M}\dfrac{1}{|\lambda_{\min}|},
 \end{equation*}
 where $\lambda_{\max}$ and $\lambda_{\min}$ are the minimal and maximal eigenvalues of $H_{\alpha,k}$ for all $\alpha$ and $k$, respectively. The parameters $t$ and $x$ satisfy the following relation,
 \begin{equation*}
 	x=\dfrac{d}{M^{2}}+t^{2}(M-1)(\sqrt{M}+1)^{2}.
 \end{equation*}

Let $\pazocal{H}_{1}$ and $\pazocal{H}_{2}$ be $d$-dimensional vector spaces. The concurrence of a pure state $\ket{\psi}\in \pazocal{H}_{1}\otimes \pazocal{H}_{2}$ is defined by
\begin{equation*}
	C(\ket{\psi})=\sqrt{2(1-{\rm tr}(\rho_{1}^{2}))},
\end{equation*}
where $\rho_{1}={\rm tr}_{2}(\ket{\psi}\bra{\psi})$ is the reduced state obtained by tracing over the second subsystem. The concurrence is extended to mixed states $\rho$ by the convex roof extension,
\begin{equation*}
	C(\rho)=\min\limits_{\{p_{i},\ket{\psi_{i}}\}}\sum\limits_{i}p_{i}C(\ket{\psi_{i}}),
\end{equation*}
where the minimum is taken over all possible pure state decompositions of $\rho = \sum\limits_{i}p_{i}\ket{\psi_{i}}\bra{\psi_{i}}$, with $p_{i}\geq 0$ and $\sum\limits_{i}p_{i}=1$.

\begin{theorem}
Let $\{E_{\alpha,k}|\alpha=1,\cdots,N;\,k=1,\cdots,M\}$ be a informationally complete $(N, M)$-POVM with the free parameter $x$ on the $d$ dimensional Hilbert space $\pazocal{H}$, $\rho$ be a bipartite state in $\pazocal{H}\otimes\pazocal{H}$, $\{\ket{w_{\alpha,k}}|\alpha=1,\cdots,N;\,k=1,\cdots,M\}$ be an orthonormal basis of $\mathbb{C}^{NM}$. Define $\pazocal{P}(\rho)=\sum\limits_{\alpha,\beta=1}^{N}\sum\limits_{k,l=1}^{M}{\rm tr}\left(\rho\left(E_{\alpha,k}\otimes E_{\beta,l}\right)\right)\ket{w_{\alpha,k}}\bra{w_{\beta,l}}$. The concurrence $C(\rho)$ satisfies
	\begin{equation}\label{thm}
		C(\rho)\geq \dfrac{M(M-1)}{xM^{2}-d}\sqrt{\dfrac{2}{d(d-1)}}\left(\|\pazocal{P}(\rho)\|_{\rm tr}-\dfrac{(d-1)(xM^{2}+d^{2})}{dM(M-1)}\right).
	\end{equation}
\end{theorem}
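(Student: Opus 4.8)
The plan is to prove \eqref{thm} first for pure states and then pass to mixed states by the convex roof. Since $\pazocal{P}$ is linear and $\|\cdot\|_{\rm tr}$ is a norm, for any decomposition $\rho=\sum_{i}p_{i}\ket{\psi_{i}}\bra{\psi_{i}}$ one has $\|\pazocal{P}(\rho)\|_{\rm tr}\le\sum_{i}p_{i}\|\pazocal{P}(\ket{\psi_{i}}\bra{\psi_{i}})\|_{\rm tr}$; hence, writing $b:=\frac{(d-1)(xM^{2}+d^{2})}{dM(M-1)}$ and $\lambda:=\frac{xM^{2}-d}{M(M-1)}>0$ (positive because $x>d/M^{2}$), it suffices to establish the pure-state estimate
\begin{equation*}
\|\pazocal{P}(\ket{\psi}\bra{\psi})\|_{\rm tr}\le b+\lambda\sqrt{\tfrac{d(d-1)}{2}}\,C(\ket{\psi}),
\end{equation*}
after which applying it termwise to an optimal decomposition of $\rho$, taking the infimum, and dividing by $\lambda$ reproduces \eqref{thm}.

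The second ingredient is a frame identity for the $(N,M)$-POVM. I would introduce the linear map $\Phi(A)=\sum_{\alpha,k}{\rm tr}(AE_{\alpha,k})\ket{w_{\alpha,k}}$ from operators on $\pazocal{H}$ to $\mathbb{C}^{NM}$. The four defining trace relations say exactly that the Gram matrix $[{\rm tr}(E_{\alpha,k}E_{\beta,l})]$ has eigenvalue $dN/M$ (simple, on $\sum_{\alpha,k}\ket{w_{\alpha,k}}$, since $\sum_{\alpha,k}E_{\alpha,k}=NI_{d}$), $\lambda$ (multiplicity $d^{2}-1$), and $0$ (multiplicity $N-1$, since $\sum_{k}E_{\alpha,k}=I_{d}$ for each $\alpha$); equivalently the frame superoperator $A\mapsto\sum_{\alpha,k}{\rm tr}(AE_{\alpha,k})E_{\alpha,k}$ acts as $\lambda$ on traceless operators and as $dN/M$ on $I_{d}$. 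This yields, for every operator $A$,
\begin{equation*}
\|\Phi(A)\|^{2}=\lambda\,{\rm tr}(A^{\dagger}A)+\Big(\tfrac{dN}{M}-\lambda\Big)\tfrac{1}{d}\,|{\rm tr}A|^{2}.
\end{equation*}
In particular $\|\Phi(\ket{a}\bra{b})\|^{2}=\lambda+(\tfrac{N}{M}-\tfrac{\lambda}{d})\delta_{ab}$ for orthonormal $\ket{a},\ket{b}$, which is $\lambda$ for $a\ne b$ and $\frac{(d-1)\lambda}{d}+\frac{N}{M}$ for $a=b$; using $N(M-1)=d^{2}-1$ one checks $\frac{(d-1)\lambda}{d}+\frac{N}{M}=b$. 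Also $\|\Phi(\sigma)\|^{2}\le b$ for every density matrix $\sigma$, which together with $\pazocal{P}(\sigma\otimes\tau)=\Phi(\sigma)\Phi(\tau)^{\rm T}$ recovers the separability threshold $\|\pazocal{P}(\rho)\|_{\rm tr}\le b$ and pins down the subtracted constant.

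For the pure-state estimate I would fix a Schmidt decomposition $\ket{\psi}=\sum_{a}\sqrt{\mu_{a}}\ket{a}\ket{a'}$, so that $\ket{\psi}\bra{\psi}=\sum_{a,b}\sqrt{\mu_{a}\mu_{b}}\,(\ket{a}\bra{b})\otimes(\ket{a'}\bra{b'})$. Since $\pazocal{P}$ is linear with $\pazocal{P}(A\otimes B)=\Phi(A)\Phi(B)^{\rm T}$, this gives a factorization $\pazocal{P}(\ket{\psi}\bra{\psi})=F_{1}DF_{2}^{\rm T}$ with $D$ the positive diagonal matrix of entries $\sqrt{\mu_{a}\mu_{b}}$ and $F_{1},F_{2}$ the matrices whose columns are the vectors $\Phi(\ket{a}\bra{b})$, $\Phi(\ket{a'}\bra{b'})$. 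Because $\pazocal{P}(\ket{\psi}\bra{\psi})$ need not be Hermitian, one cannot just take its trace; instead apply the Cauchy--Schwarz inequality for the trace norm, $\|XY\|_{\rm tr}\le\|X\|_{\rm HS}\|Y\|_{\rm HS}$, to $(F_{1}D^{1/2})(D^{1/2}F_{2}^{\rm T})$. Evaluating each Hilbert--Schmidt factor with the formula for $\|\Phi(\ket{a}\bra{b})\|^{2}$ and $\sum_{a}\mu_{a}=1$, both equal $\big(b+\lambda(Q^{2}-1)\big)^{1/2}$ with $Q:=\sum_{a}\sqrt{\mu_{a}}$, so $\|\pazocal{P}(\ket{\psi}\bra{\psi})\|_{\rm tr}\le b+\lambda(Q^{2}-1)$. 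Finally $Q^{2}-1=\sum_{a\ne b}\sqrt{\mu_{a}\mu_{b}}$, and Cauchy--Schwarz over the at most $d(d-1)$ ordered pairs $a\ne b$ gives $\sum_{a\ne b}\sqrt{\mu_{a}\mu_{b}}\le\sqrt{d(d-1)\sum_{a\ne b}\mu_{a}\mu_{b}}=\sqrt{d(d-1)(1-{\rm tr}\rho_{1}^{2})}=\sqrt{\tfrac{d(d-1)}{2}}\,C(\ket{\psi})$, which is the claimed bound.

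The hard part, I expect, will be the frame identity of the second paragraph — verifying that the nontrivial Gram eigenvalue is exactly $\lambda=\tfrac{xM^{2}-d}{M(M-1)}$ and the distinguished one is $dN/M$, so that the constants of \eqref{thm} come out right — and, more conceptually, settling on the right way to estimate $\|\pazocal{P}(\ket{\psi}\bra{\psi})\|_{\rm tr}$: a cruder bound such as $\|\pazocal{P}\|_{\rm tr}\le\sqrt{{\rm rank}}\,\|\pazocal{P}\|_{\rm HS}$ is too lossy and already fails for states close to product ones, whereas the factorization $F_{1}DF_{2}^{\rm T}$ followed by a single Cauchy--Schwarz step is saturated there and keeps the estimate tight. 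The convex-roof step and the final elementary inequality are then routine.
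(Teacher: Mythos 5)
Your proposal is correct: I checked the frame identity (the nontrivial Gram eigenvalue is indeed $\lambda=\tfrac{xM^{2}-d}{M(M-1)}$, the distinguished one $dN/M$, and $\tfrac{(d-1)\lambda}{d}+\tfrac{N}{M}=b$ using $N(M-1)=d^{2}-1$), the factorization $\pazocal{P}(\ket{\psi}\bra{\psi})=F_{1}DF_{2}^{\mathsf{T}}$, and the final Cauchy--Schwarz steps; all the constants come out as in \eqref{thm}. The skeleton (reduction to pure states by convexity of the trace norm, Schmidt decomposition, the conical 2-design property of the $(N,M)$-POVM, and the bound $\sum_{a\neq b}\sqrt{\mu_a\mu_b}\le\sqrt{d(d-1)/2}\,C(\ket{\psi})$, which you rederive rather than cite) coincides with the paper's, but the crucial trace-norm step is handled genuinely differently. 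The paper writes $\pazocal{P}(\ket{\psi}\bra{\psi})=\sum_{i,j}\lambda_i\lambda_j\ket{u_{ij}}\bra{\overline{v}_{ij}}$, observes that the two families $\{\ket{u_{ij}}\}$ and $\{\ket{\overline{v}_{ij}}\}$ have identical (real) Gram matrices, hence are related by a unitary, and thereby computes the trace norm \emph{exactly}: $\|\pazocal{P}(\ket{\psi}\bra{\psi})\|_{\rm tr}=b+\lambda(Q^{2}-1)$. You instead apply $\|XY\|_{\rm tr}\le\|X\|_{\rm HS}\|Y\|_{\rm HS}$ to $(F_{1}D^{1/2})(D^{1/2}F_{2}^{\mathsf{T}})$, obtaining only the inequality $\le b+\lambda(Q^{2}-1)$ --- which is all the theorem needs, and in fact is saturated, as the paper's equality shows. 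The paper's route yields the sharper statement (an exact pure-state formula, which is what makes the isotropic-state example come out exactly), while yours is more elementary --- no unitary-equivalence argument, no need for the two Gram matrices to coincide --- and would adapt more readily to asymmetric settings (different local dimensions or different POVMs on the two factors). Your spectral derivation of the conical 2-design identity from the four defining trace relations is also a nice self-contained replacement for the identity the paper imports from the literature.
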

\begin{proof}
Let $\{p_i,\ket{\psi_i}\}$ be optimal pure state decomposition of $\rho$ such that $C(\rho)=\sum\limits_{i}p_{i}C(\ket{\psi_i})$. Noting that $\sum\limits_{i}p_{i}\|\pazocal{P}(\ket{\psi_{i}}\bra{\psi_{i}})\|_{\rm tr}\geq \|\pazocal{P}(\rho)\|_{\rm tr}$ by the convex property of the trace norm, we only need to prove the theorem for the case of pure states $\ket{\psi}\in \pazocal{H}\otimes \pazocal{H}$,
\begin{equation}\label{pure}
C(\ket{\psi})\geq\dfrac{M(M-1)}{xM^{2}-d}\sqrt{\dfrac{2}{d(d-1)}}
\left(\|\pazocal{P}(\ket{\psi}\bra{\psi})\|_{\rm tr}-\dfrac{(d-1)(xM^{2}+d^{2})}{dM(M-1)}\right).
\end{equation}
By Schmidt decomposition, there exist orthonormal bases $\ket{e_1},\cdots,\ket{e_{d}}$ and $\ket{f_1},\cdots,\ket{f_{d}}$ in $\pazocal{H}$ such that $\ket{\psi}=\sum\limits_{i=1}^{r}\lambda_{i}\ket{e_i}\otimes\ket{f_{i}}$, where $\lambda_{i}\geq 0$ and $\sum\limits_{i=1}^{r}\lambda_{i}^{2}=1$. Denote $\ket{u_{ij}}=\sum\limits_{\alpha=1}^{N}\sum\limits_{k=1}^{M}
\braket{e_{j}|E_{\alpha,k}|e_{i}}\ket{w_{\alpha,k}}$ and $\ket{v_{ij}}=\sum\limits_{\alpha=1}^{N}
\sum\limits_{k=1}^{M}\braket{f_{j}|E_{\alpha,k}|f_{i}}\ket{w_{\alpha,k}}$. Then
\begin{flalign*}
 & \begin{array}{ll}
 	\pazocal{P}(\ket{\psi}\bra{\psi})
 	\hspace{-2mm} & =\sum\limits_{\alpha,\beta=1}^{N}\sum\limits_{k,l=1}^{M}
 	  \sum\limits_{i=1}^{r}\sum\limits_{j=1}^{r}\lambda_{i}\lambda_{j}
 	  \braket{e_{j}|E_{\alpha,k}|e_{i}}\braket{f_{j}|E_{\beta,l}|f_{i}}\ket{w_{\alpha,k}}\bra{w_{\beta,l}}\\
 	& =\sum\limits_{i=1}^{r}\sum\limits_{j=1}^{r}\lambda_{i}\lambda_{j}
 \ket{u_{ij}}\bra{\overline{v}_{ij}},
 	\end{array}&
\end{flalign*}
where $\ket{\overline{v}_{ij}}=\sum\limits_{\beta=1}^{N}
\sum\limits_{l=1}^{M}\overline{\braket{f_{j}|E_{\beta,l}|f_{i}}}\ket{w_{\beta,l}}$ with
$\overline{\braket{f_{j}|E_{\beta,l}|f_{i}}}$ standing for complex conjugation of ${\braket{f_{j}|E_{\beta,l}|f_{i}}}$.\par

{\bf 1)} We first consider the case of $\ket{u_{ij}}=\ket{\overline{v}_{ij}}$ for $\forall i,j\in\{1,\cdots,d\}$. In this case, it is obvious that $\pazocal{P}(\ket{\psi}\bra{\psi})$ is a positive semidefinite, and then
    \begin{equation}\label{11}
    	\|\pazocal{P}(\ket{\psi}\bra{\psi})\|_{\rm tr}={\rm tr}\left(\pazocal{P}(\ket{\psi}\bra{\psi})\right)
    =\sum\limits_{i=1}^{r}\sum\limits_{j=1}^{r}\lambda_{i}\lambda_{j}\braket{u_{ij}|u_{ij}}.
    \end{equation}
We denote $F=\sum\limits_{i=1}^{d}\sum\limits_{j=1}^{d}\ket{e_{i}}\bra{e_{j}}\otimes\ket{e_{j}}\bra{e_{i}}$. Since $\sum\limits_{\alpha=1}^{N}\sum\limits_{k=1}^{M}E_{\alpha,k}\otimes E_{\alpha,k}=\dfrac{xM^{2}-d}{M(M-1)}F+\dfrac{d^{3}-xM^{2}}{dM(M-1)}I$ \cite{WOS:001270740400004,WOS:001062294000001}, we have
    \begin{flalign*}
    	& \begin{array}{ll}
    		\braket{u_{ij}|u_{i^{\prime}j^{\prime}}}
    		\hspace{-2mm} & =\sum\limits_{\alpha,\beta=1}^{N}\sum\limits_{k,l=1}^{M}
    \overline{\braket{e_{j}|E_{\alpha,k}|e_{i}}}\braket{e_{j^{\prime}}
    |E_{\beta,l}|e_{i^{\prime}}}\braket{w_{\alpha,k}|w_{\beta,l}}\\
    &
    =\sum\limits_{\alpha=1}^{N}\sum\limits_{k=1}^{M}\braket{e_{i}|E_{\alpha,k}
    |e_{j}}\braket{e_{j^{\prime}}|E_{\alpha,k}|e_{i^{\prime}}}\\
    		& =\sum\limits_{\alpha=1}^{N}\sum\limits_{k=1}^{M}\braket{e_{i}e_{j^{\prime}}
    |E_{\alpha,k}\otimes E_{\alpha,k}|e_{j}e_{i^{\prime}}}\\
    &
    =\braket{e_{i}e_{j^{\prime}}|\sum\limits_{\alpha=1}^{N}\sum\limits_{k=1}^{M}E_{\alpha,k}\otimes E_{\alpha,k}|e_{j}e_{i^{\prime}}}\\
    		& =\dfrac{xM^{2}-d}{M(M-1)}\braket{e_{i}e_{j^{\prime}}|F|e_{j}e_{i^{\prime}}}+\dfrac{d^{3}-xM^{2}}{dM(M-1)}\braket{e_{i}e_{j^{\prime}}|e_{j}e_{i^{\prime}}}\\
    		& =\dfrac{xM^{2}-d}{M(M-1)}\delta_{ii^{\prime}}\delta_{jj^{\prime}}
    +\dfrac{d^{3}-xM^{2}}{dM(M-1)}\delta_{ij}\delta_{i^{\prime}j^{\prime}}.
    	\end{array}&
    \end{flalign*}
    Therefore, $\braket{u_{ij}|u_{ij}}=\dfrac{xM^{2}-d}{M(M-1)}+\dfrac{d^{3}-xM^{2}}{dM(M-1)}\delta_{ij}$ and
    \begin{flalign}\label{12}
    	& \begin{array}{ll}
    		\|\pazocal{P}(\ket{\psi}\bra{\psi})\|_{\rm tr}
    		\hspace{-2mm} & =\dfrac{xM^{2}-d}{M(M-1)}\sum\limits_{i=1}^{r}\sum\limits_{j=1}^{r}\lambda_{i}\lambda_{j}+\dfrac{d^{3}-xM^{2}}{dM(M-1)}\sum\limits_{i=1}^{r}\sum\limits_{j=1}^{r}\delta_{ij}\lambda_{i}\lambda_{j}\\
    		& =\dfrac{xM^{2}-d}{M(M-1)}\sum\limits_{i=1}^{r}\sum\limits_{j=1}^{r}\lambda_{i}\lambda_{j}+\dfrac{d^{3}-xM^{2}}{dM(M-1)}\sum\limits_{i=1}^{r}\lambda_{i}^{2}\\
    		& =\dfrac{xM^{2}-d}{M(M-1)}\sum\limits_{i\neq j}\lambda_{i}\lambda_{j}+\dfrac{d(xM^{2}-d)+d^{3}-xM^{2}}{dM(M-1)}\sum\limits_{i=1}^{r}\lambda_{i}^{2}\\
    		& =\dfrac{(d-1)(xM^{2}+d^{2})}{dM(M-1)}+\dfrac{2(xM^{2}-d)}{M(M-1)}
    \sum\limits_{i<j}\lambda_{i}\lambda_{j}.
    	\end{array}&
    \end{flalign}

{\bf 2)} Now we prove the general case by using the above results. Note that $\braket{\overline{v}_{ij}|\overline{v}_{i^{\prime}j^{\prime}}}
=\overline{\braket{v_{ij}|v_{i^{\prime}j^{\prime}}}}$ and $\braket{v_{ij}|v_{i^{\prime}j^{\prime}}}=\dfrac{xM^{2}-d}{M(M-1)}
\delta_{ii^{\prime}}\delta_{jj^{\prime}}+\dfrac{d^{3}-xM^{2}}{dM(M-1)}
\delta_{ij}\delta_{i^{\prime}j^{\prime}}\in\mathbb{R}$. Hence, $\braket{\overline{v}_{ij}|\overline{v}_{i^{\prime}j^{\prime}}}=\braket{v_{ij}
|v_{i^{\prime}j^{\prime}}}=\braket{u_{ij}|u_{i^{\prime}j^{\prime}}}$. Thus there exists a unitary operator $U$ on $\pazocal{H}\otimes\pazocal{H}$ such that $\ket{\overline{v}_{ij}}=U\ket{u_{ij}}$ for $i,j=1,2,\cdots,d$. As a result, we have
    \begin{flalign}\label{21}
    	& \begin{array}{ll}
    		\|\pazocal{P}(\ket{\psi}\bra{\psi})\|_{\rm tr}
    		\hspace{-2mm} & =\left\|\sum\limits_{i=1}^{r}\sum\limits_{j=1}^{r}\lambda_{i}\lambda_{j}\ket{u_{ij}}\bra{\overline{v}_{ij}}\right\|_{\rm tr}=\left\|\left(\sum\limits_{i=1}^{r}\sum\limits_{j=1}^{r}\lambda_{i}\lambda_{j}\ket{u_{ij}}\bra{u_{ij}}\right)U^{\dagger}\right\|_{\rm tr}\\[14pt]
    		& =\left\|\sum\limits_{i=1}^{r}\sum\limits_{j=1}^{r}\lambda_{i}\lambda_{j}\ket{u_{ij}}\bra{u_{ij}}\right\|_{\rm tr}=\dfrac{(d-1)(xM^{2}+d^{2})}{dM(M-1)}+\dfrac{2(xM^{2}-d)}{M(M-1)}
    \sum\limits_{i<j}\lambda_{i}\lambda_{j},
    	\end{array}&
    \end{flalign}
where the last equality is from (\ref{11}) and (\ref{12}). Since $(C(\ket{\psi}))^{2}\geq \dfrac{8}{d(d-1)}\left(\sum\limits_{i<j}\lambda_{i}\lambda_{j}\right)^{2}$ \cite{WOS:000230680000010}, we have
$C(\ket{\psi})\geq 2\sqrt{\dfrac{2}{d(d-1)}}\sum\limits_{i<j}\lambda_{i}\lambda_{j}$, which gives rise to (\ref{pure}) by using (\ref{21}).
\end{proof}

\begin{remark}
The orthonormal basis $\{\ket{w_{\alpha,k}}|\alpha=1,\cdots,N;\,k=1,\cdots,M\}$ introduced in Theorem is just for statement of convenience. The conclusion of Theorem is independent of the choice of orthonormal bases. Let $\{\ket{v_{\alpha,k}}|\alpha=1,\cdots,N;\,k=1,\cdots,M\}$ be another orthonormal basis. Then there exists a unitary operator $U$ on $\pazocal{H}\otimes\pazocal{H}$ such that $\ket{w_{\alpha,k}}=U\ket{v_{\alpha,k}}$. The norm $\|\pazocal{P}(\rho)\|_{\rm tr}$ keeps invariant under basis transformations.
\end{remark}

\begin{remark}
If $\rho$ is a separable state, $C(\rho)=0$, (\ref{thm}) gives rise to that $\|\pazocal{P}(\rho)\|_{\rm tr}\leq \dfrac{(d-1)(xM^{2}+d^{2})}{dM(M-1)}$, which is consistent with the Theorem 1 in \cite{WOS:000996643000001}.
\end{remark}

We have derived a lower bound of concurrence based on the symmetric measurements. The symmetric measurements includes SIC POVM and GSIC POVM as special cases, which gave rise to separability criteria given in \cite{WOS:000441330400001} and \cite{WOS:000446569700001}, respectively. In fact, besides separability criteria, with respect to GSIC POVM and SIC POVM, we have also the following lower bounds of concurrence from our Theorem.

\begin{corollary}
  Let $\{P_1,\cdots,P_{d^2}\}$ be a general SIC POVM with the free parameter $x$ on $\pazocal{H}$, $\rho$ be a bipartite state in $\pazocal{H}\otimes\pazocal{H}$. Denote $E_k=\sqrt{\dfrac{d(d+1)}{xd^{2}+1}}P_k$($k=1,2,\cdots,d^{2}$), $p_{kl}={\rm tr}\left(\rho\left(E_k\otimes E_l\right)\right)$ and $\pazocal{P}(\rho)=\left(p_{kl}\right)_{d^{2}\times d^{2}}$. We have
  \begin{equation*}
	C(\rho)\geq \dfrac{(d-1)(xd^{2}+1)}{xd^{3}-1}\sqrt{\dfrac{2}{d(d-1)}}\left(\|\pazocal{P}(\rho)\|_{\rm tr}-\dfrac{xd^{3}-1+d(1-xd)}{(d-1)(xd^{2}+1)}\right).
  \end{equation*}
\end{corollary}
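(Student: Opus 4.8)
\emph{Proof proposal.} The plan is to obtain the Corollary as the special case $N=1$, $M=d^{2}$ of the Theorem, while carefully tracking the rescaling factor that appears in the statement. First I would observe that a general SIC POVM $\{P_{1},\dots,P_{d^{2}}\}$ with $\mathrm{tr}(P_{k}^{2})=x$ is \emph{itself} an informationally complete $(N,M)$-POVM with $N=1$ and $M=d^{2}$: indeed $\mathrm{tr}(P_{k})=1/d=d/M$, $\mathrm{tr}(P_{k}^{2})=x$, and $\mathrm{tr}(P_{k}P_{l})=\frac{1-dx}{d(d^{2}-1)}=\frac{d-Mx}{M(M-1)}$ for $k\neq l$ are exactly the defining relations (the cross-block relation $\mathrm{tr}(E_{\alpha,k}E_{\beta,l})=d/M^{2}$ is vacuous since there is a single block $\alpha=1$), the identity $N(M-1)=d^{2}-1$ gives informational completeness, and the admissible interval $\frac{1}{d^{3}}<x\le\frac{1}{d^{2}}$ of a general SIC POVM coincides with $\frac{d}{M^{2}}<x\le\min\{d^{2}/M^{2},\,d/M\}$ for $M=d^{2}$.

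Next I would apply the Theorem to this $(1,d^{2})$-POVM, with associated operator $\pazocal{P}_{0}(\rho)=\sum_{k,l}\mathrm{tr}(\rho(P_{k}\otimes P_{l}))\ket{w_{k}}\bra{w_{l}}$, which gives $C(\rho)\ge\frac{M(M-1)}{xM^{2}-d}\sqrt{\frac{2}{d(d-1)}}\left(\|\pazocal{P}_{0}(\rho)\|_{\rm tr}-\frac{(d-1)(xM^{2}+d^{2})}{dM(M-1)}\right)$ with $M=d^{2}$. A short simplification (using $xd^{4}-d=d(xd^{3}-1)$ and $xd^{4}+d^{2}=d^{2}(xd^{2}+1)$, together with $d^{2}-1=(d-1)(d+1)$) collapses the prefactor to $\frac{d(d^{2}-1)}{xd^{3}-1}$ and the subtracted constant to $\frac{xd^{2}+1}{d(d+1)}$.

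Then I would account for the rescaling $E_{k}=\sqrt{\frac{d(d+1)}{xd^{2}+1}}\,P_{k}$. Since $E_{k}\otimes E_{l}=\frac{d(d+1)}{xd^{2}+1}\,P_{k}\otimes P_{l}$, the Corollary's matrix $\pazocal{P}(\rho)=(p_{kl})_{d^{2}\times d^{2}}$ equals $\frac{d(d+1)}{xd^{2}+1}\,\pazocal{P}_{0}(\rho)$, and because $\frac{d(d+1)}{xd^{2}+1}>0$ this factor pulls out of the trace norm: $\|\pazocal{P}_{0}(\rho)\|_{\rm tr}=\frac{xd^{2}+1}{d(d+1)}\|\pazocal{P}(\rho)\|_{\rm tr}$. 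Substituting into the bound, the factor $\frac{xd^{2}+1}{d(d+1)}$ multiplies through: it combines with $\frac{d(d^{2}-1)}{xd^{3}-1}$ to yield the prefactor $\frac{(d-1)(xd^{2}+1)}{xd^{3}-1}$ and turns the subtracted constant into $1$. Finally, rewriting $1=\frac{(d-1)(xd^{2}+1)}{(d-1)(xd^{2}+1)}$ and expanding $(d-1)(xd^{2}+1)=xd^{3}-xd^{2}+d-1=xd^{3}-1+d(1-xd)$ in the numerator produces exactly the form stated in the Corollary. (One should also invoke Remark~1 to note that presenting $\pazocal{P}(\rho)$ as the coordinate matrix $(p_{kl})$ in the standard basis of $\mathbb{C}^{d^{2}}$, rather than through an abstract orthonormal basis, leaves $\|\pazocal{P}(\rho)\|_{\rm tr}$ unchanged.)

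There is no genuine obstacle here: all the analytic content is carried by the Theorem, and everything else is elementary algebra. The only point deserving a little care is the bookkeeping of the normalization $\sqrt{d(d+1)/(xd^{2}+1)}$ — recognizing that it is a strictly positive scalar, so it factors cleanly out of the trace norm and can be absorbed simultaneously into the prefactor and into the separability threshold, which is the whole reason the threshold becomes the constant $1$ in this normalization.
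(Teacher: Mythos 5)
Your proposal is correct and follows exactly the route the paper intends: the paper states Corollary~1 without a written proof, presenting it as the specialization $N=1$, $M=d^{2}$ of the Theorem, and your algebra (the collapse of the prefactor to $\frac{d(d^{2}-1)}{xd^{3}-1}$, the constant to $\frac{xd^{2}+1}{d(d+1)}$, and the absorption of the positive rescaling factor $\frac{d(d+1)}{xd^{2}+1}$ into both the prefactor and the threshold, turning the latter into $1=\frac{xd^{3}-1+d(1-xd)}{(d-1)(xd^{2}+1)}$) checks out. No gaps.
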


\begin{corollary}
Let $\Pi_1,\cdots,\Pi_{d^2}$ be SIC projectors. Denote $E_k=\sqrt{\dfrac{d+1}{2d}}\Pi_k$ ($k=1,2,\cdots,d^{2}$), $p_{kl}={\rm tr}\left(\rho\left(E_k\otimes E_l\right)\right)$ and $\pazocal{P}(\rho)=\left(p_{kl}\right)_{d^{2}\times d^{2}}$. Then
\begin{equation*}
	C(\rho)\geq 2\sqrt{\dfrac{2}{d(d-1)}}(\|\pazocal{P}(\rho)\|_{\rm tr}-1).
\end{equation*}
\end{corollary}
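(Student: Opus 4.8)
The plan is to obtain Corollary 2 as the specialization of Corollary 1 to the extremal value of the free parameter $x = 1/d^2$, at which a general SIC POVM degenerates to an honest (rank-one) SIC POVM. First I would recall the defining relations of SIC projectors: $\Pi_k = \ket{\phi_k}\bra{\phi_k}$ with ${\rm tr}(\Pi_j\Pi_k) = \frac{1}{d+1}$ for $j\neq k$ and ${\rm tr}(\Pi_k^2)=1$, and $\sum_k\Pi_k = dI_d$. Setting $P_k = \frac{1}{d}\Pi_k$ then yields a POVM which is a general SIC POVM with free parameter $x = {\rm tr}(P_k^2) = \frac{1}{d^2}$. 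One should note that for $N=1$, $M=d^2$ the admissible range in the Theorem is $\frac{1}{d^3} < x \le \min\{\frac{1}{d^2},\frac{1}{d}\} = \frac{1}{d^2}$, so $x = 1/d^2$ is exactly the (attainable) upper endpoint, which is precisely the case in which the POVM elements become rank-one.

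Next I would substitute $x = 1/d^2$ into the rescaling of Corollary 1: $E_k = \sqrt{\frac{d(d+1)}{xd^2+1}}\,P_k = \sqrt{\frac{d(d+1)}{2}}\,P_k = \sqrt{\frac{d+1}{2d}}\,\Pi_k$, which is exactly the $E_k$ appearing in Corollary 2; hence the matrix $\pazocal{P}(\rho) = (p_{kl})$ of Corollary 2 coincides with that of Corollary 1. It then remains to evaluate the two coefficients of Corollary 1 at $x = 1/d^2$: since $xd^2+1 = 2$, $xd^3-1 = d-1$, and $d(1-xd) = d-1$, we get $\frac{(d-1)(xd^2+1)}{xd^3-1} = \frac{2(d-1)}{d-1} = 2$ and $\frac{xd^3-1+d(1-xd)}{(d-1)(xd^2+1)} = \frac{2(d-1)}{2(d-1)} = 1$. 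Plugging these into the inequality of Corollary 1 gives $C(\rho) \ge 2\sqrt{\frac{2}{d(d-1)}}\bigl(\|\pazocal{P}(\rho)\|_{\rm tr} - 1\bigr)$, as claimed.

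Since Corollary 1 was itself obtained from the Theorem by taking $N=1$, $M=d^2$ and tracking the normalization $E_k = \sqrt{d(d+1)/(xd^2+1)}\,P_k$, an equally valid route is to go directly from the Theorem: with $M = d^2$, $x = 1/d^2$ one has $xM^2 = d^2$, $xM^2 - d = d(d-1)$, $M(M-1) = d^2(d-1)(d+1)$, so $\frac{M(M-1)}{xM^2-d} = d(d+1)$ and $\frac{(d-1)(xM^2+d^2)}{dM(M-1)} = \frac{2}{d(d+1)}$; then rescaling $\pazocal{P}$ by the factor $d(d+1)/2$ (the square of $\sqrt{d(d+1)/2}$) converts these into the coefficients $2$ and $1$ above. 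I do not expect a genuine obstacle here — the content is entirely a bookkeeping substitution — the only points requiring a word of care are checking that $x = 1/d^2$ is an admissible parameter (it is the boundary case of rank-one elements) and that $xd^3 - 1 = d-1 > 0$, so that the sign of the inequality is preserved for $d \ge 2$.
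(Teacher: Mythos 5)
Your proposal is correct and matches the paper's intent: the paper states Corollary 2 as a direct specialization of the Theorem (equivalently of Corollary 1 at the rank-one endpoint $x=1/d^{2}$ with $N=1$, $M=d^{2}$), and your substitutions of the coefficients and of the rescaling $E_k=\sqrt{(d+1)/(2d)}\,\Pi_k$ all check out. The care you take with the admissibility of $x=1/d^{2}$ and the positivity of $xM^{2}-d=d(d-1)$ is exactly the right bookkeeping.
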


We give several examples below to illustrate the advantages of our results.

\begin{example}
	Consider the $3\otimes 3$ PPT entangled state:
	\begin{equation*}
		\rho = \dfrac{1}{4}\left(I-\sum\limits_{i=0}^{4}\ket{\psi_{i}}\bra{\psi_{i}}\right),
	\end{equation*}
    where $\ket{\psi_{0}}=\dfrac{\ket{0}(\ket{0}-\ket{1})}{\sqrt{2}}$, $\ket{\psi_{1}}=\dfrac{(\ket{0}-\ket{1})\ket{2}}{\sqrt{2}}$, $\ket{\psi_{2}}=\dfrac{\ket{2}(\ket{1}-\ket{2})}{\sqrt{2}}$, $\ket{\psi_{3}}=\dfrac{(\ket{1}-\ket{2})\ket{0}}{\sqrt{2}}$ and $\ket{\psi_{4}}=\dfrac{(\ket{0}+\ket{1}+\ket{3})(\ket{0}+\ket{1}+\ket{3})}{3}$.

Take the $(N,M)$-POVM in Theorem to be $(8,2)$-POVM with the Hermitian basis operator $G_{\alpha, k}$ given by the Gell-Mann matrices,
    $G_{11}=\dfrac{1}{\sqrt{2}}\begin{pmatrix}
    	0 & 1 & 0\\
    	1 & 0 & 0\\
    	0 & 0 & 0
    \end{pmatrix}$,
    $G_{21}=\dfrac{1}{\sqrt{2}}\begin{pmatrix}
    	0 & -\mathrm{i} & 0\\
    	\mathrm{i} & 0 & 0\\
    	0 & 0 & 0
    \end{pmatrix}$,
    $G_{31}=\dfrac{1}{\sqrt{2}}\begin{pmatrix}
    	0 & 0 & 1\\
    	0 & 0 & 0\\
    	1 & 0 & 0
    \end{pmatrix}$,
   $G_{41}=\dfrac{1}{\sqrt{2}}\begin{pmatrix}
    	0 & 0 & -\mathrm{i}\\
    	0 & 0 & 0\\
    	\mathrm{i} & 0 & 0
    \end{pmatrix}$, \\
    $G_{51}=\dfrac{1}{\sqrt{2}}\begin{pmatrix}
    	0 & 0 & 0\\
    	0 & 0 & 1\\
    	0 & 1 & 0
    \end{pmatrix}$,
    $G_{61}=\dfrac{1}{\sqrt{2}}\begin{pmatrix}
    	0 & 0 & 0\\
    	0 & 0 & -\mathrm{i}\\
    	0 & \mathrm{i} & 0
    \end{pmatrix}$,
    $G_{71}=\dfrac{1}{\sqrt{2}}\begin{pmatrix}
    	1 & 0 & 0\\
    	0 & -1 & 0\\
    	0 & 0 & 0
    \end{pmatrix}$,
    $G_{81}=\dfrac{1}{\sqrt{6}}\begin{pmatrix}
    	1 & 0 & 0\\
    	0 & 1 & 0\\
    	0 & 0 & -2
    \end{pmatrix}$.
It is verified that the parameter $x=\frac{3}{4}+t^{2}(\sqrt{2}+1)^{2}$ with $t\in[-0.2536,0.2536]$.
Setting $t=0.01$ we have $C(\rho)\geq 0.055547$ from our Theorem. By using the theorem 6 in \cite{WOS:000940186800002}, we obtain $C(\rho)\geq 0.05399$. Hence, our bound is better. Next we consider a state by mixing $\rho$ with white noise,
    \begin{equation*}
    	\rho_{p}=\dfrac{1-p}{9}I_{9}+p\rho.
    \end{equation*}
In Fig.1, the solid red line is the bound from our Theorem based on $(8,2)$-POVM with $t=0.01$, which shows that $\rho_{p}$ is entangled for $0.88218\leq p\leq 1$. The dashed bule line is the bound from Theorem 2 in \cite{lufei24} by taking $\mu=(\frac{2227}{347},\frac{4326}{571},\frac{2233}{345})^{\mathsf{T}}$ and $\nu=(\frac{6819}{1093},\frac{1580}{219},\frac{2491}{361})^{\mathsf{T}}$, which shows that $\rho_{p}$ is entangled for $0.88221\leq p\leq 1$. The dashed black line is the bound from Theorem 6 in \cite{WOS:000940186800002} with $\alpha=\beta=5$. Obviously, our lower bound is better than the above two. In addition, the solid green line is the bound from Corollary 1 based on GSIC POVM with $x=0.04984$, the solid purple line is the bound from Corollary 2, the solid orange line is the bound from realignment\cite{WOS:000230680000010}, $C(\rho)\geq \sqrt{\dfrac{2}{d(d-1)}}\left(\|\pazocal{R}(\rho)\|_{\rm tr}-1\right)$, where $\pazocal{R}(\rho)$ is realigned matrix of $\rho$.
\begin{figure}[t]
    	\includegraphics[width=10cm]{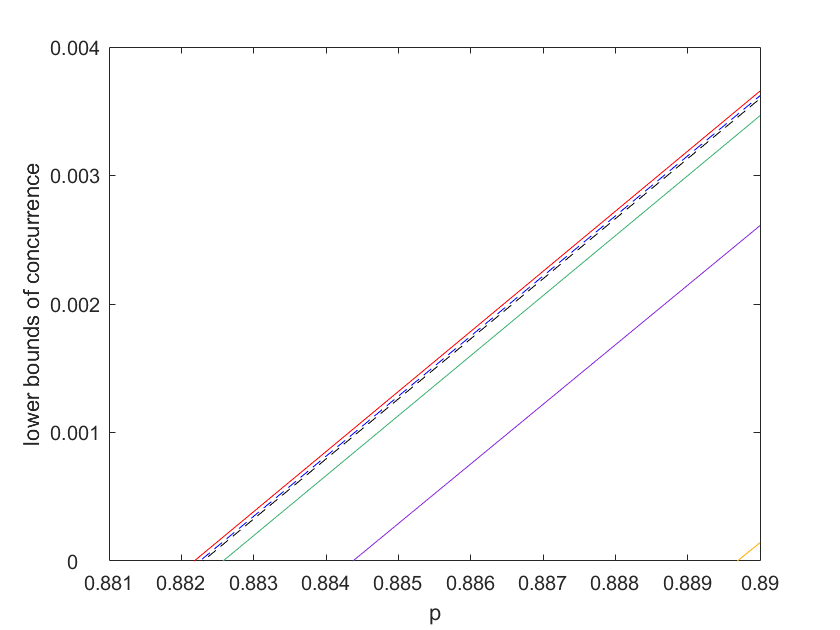}
    	\caption{Lower bounds of $C(\rho_{p})$. Solid red line for the bound from our Theorem, dashed bule line for the bound from Theorem 2 in \cite{lufei24}, dashed black line is the bound from Theorem 6 in \cite{WOS:000940186800002}, the solid green line is the bound from Corollary 1 based on GSIC POVM with $x=0.04984$, the solid purple line is the bound from Corollary 2, the solid orange line is the bound from realignment.}
    \end{figure}
\end{example}

\begin{example}
    Consider the following isotropic state,
    \begin{equation*}    	\rho_{f}=\dfrac{d^{2}}{d^{2}-1}\left(\dfrac{1-f}{d^{4}}I
    +(f-\dfrac{1}{d^{2}})\ket{\Psi^{+}}\bra{\Psi^{+}}\right)
    =v\ket{\Psi^{+}}\bra{\Psi^{+}}+(1-v)\dfrac{I}{d^{2}},
    \end{equation*}
    where $I$ is the identity operator on ${H\otimes H}$, $\ket{\Psi^{+}}=\dfrac{1}{\sqrt{d}}\sum\limits_{i=1}^{d}\ket{ii}$ with $0\leq f\leq 1$ and $v=\dfrac{d^{2}f-1}{d^{2}-1}$.
    Recall that $(\mathbb{I}_{d}\otimes Q)\ket{\Psi^{+}}=(Q^{\mathsf{T}}\otimes \mathbb{I}_{d})\ket{\Psi^{+}}$ for any $Q\in L(H)$. Then
    \begin{flalign*}
    	& \begin{array}{ll}
    		{\rm tr}\left(\rho_{f}\left(E_{\alpha,k}\otimes E_{\beta,l}^{\ast}\right)\right)
    		\hspace{-2mm} & ={\rm tr}\left(\left(v\ket{\Psi^{+}}\bra{\Psi^{+}}+(1-v)\dfrac{I}{d^{2}}\right)\left(E_{\alpha,k}\otimes E_{\beta,l}^{\ast}\right)\right)\\
    		& =v{\rm tr}((E_{\alpha,k}\otimes E_{\beta,l}^{\ast})\ket{\Psi^{+}}\bra{\Psi^{+}})+\dfrac{1-v}{d^{2}}{\rm tr}(E_{\alpha,k}\otimes E_{\beta,l}^{\ast})\\[10pt]
    		& =v{\rm tr}((E_{\alpha,k}E_{\beta,l}^{\dagger}\otimes \mathbb{I}_{d})\ket{\Psi^{+}}\bra{\Psi^{+}})+\dfrac{1-v}{d^{2}}{\rm tr}(E_{\alpha,k}){\rm tr}(E_{\beta,l}^{\ast})\\[5pt]
    		& =\dfrac{v}{d}{\rm tr}(E_{\alpha,k}E_{\beta,l})+(1-v)\dfrac{1}{M^{2}}.
    	\end{array}&
    \end{flalign*}
Denote $A=\dfrac{v(M^{2}x-d)}{dM(M-1)}I_{M}+\left(\dfrac{1}{M^{2}}
-\dfrac{v(M^{2}x-d)}{dM^{2}(M-1)}\right)J_{M}$, where $J_{M}$ is the matrix whose entries are all one. As a result, we obtain
    \begin{equation*}
    	\|\pazocal{P}(\rho)\|_{\rm tr}=N{\rm tr}(A)=\dfrac{N}{M}+\dfrac{vN(M^{2}x-d)}{dM}.
    \end{equation*}
    By using our Theorem, we have
    \begin{flalign*}
    	& \begin{array}{ll}
    		C(\rho)
    		\hspace{-2mm} & \geq \sqrt{\dfrac{2}{d(d-1)}}\dfrac{v(d^{2}-1)(M^{2}x-d)+d(d^{2}-1)-d(xM^{2}-d)-d^{3}+xM^{2}}{d(xM^{2}-d)}\\[10pt]
    		& \quad =\sqrt{\dfrac{2}{d(d-1)}}\left(\dfrac{d^{2}f-1-d}{d}+\dfrac{xM^{2}-d}{d(xM^{2}-d)}\right)\\[10pt]
    		& \quad =\sqrt{\dfrac{2}{d(d-1)}}\left(df-1\right).
    	\end{array}&
    \end{flalign*}
Interestingly, for the isotropic state our bound is the exact value of the concurrence\cite{WOS:000180804600033}.
\end{example}

\begin{example}
	Consider the mixture of the bound entangled state proposed by Horodecki \cite{WOS:A1997XN80700002},
	\begin{equation*}
		\rho_{\tau}=\dfrac{1}{1+8\tau}
		\begin{pmatrix}
			\tau & 0 & 0 & 0 & \tau & 0 & 0 & 0 & \tau\\
			0 & \tau & 0 & 0 & 0 & 0 & 0 & 0 & 0\\
			0 & 0 & \tau & 0 & 0 & 0 & 0 & 0 & 0\\
			0 & 0 & 0 & \tau & 0 & 0 & 0 & 0 & 0\\
			\tau & 0 & 0 & 0 & \tau & 0 & 0 & 0 & \tau\\
			0 & 0 & 0 & 0 & 0 & \tau & 0 & 0 & 0\\
			0 & 0 & 0 & 0 & 0 & 0 & \frac{1+\tau}{2} & 0 & \frac{\sqrt{1-\tau^{2}}}{2}\\
			0 & 0 & 0 & 0 & 0 & 0 & 0 & 0 & 0\\
			\tau & 0 & 0 & 0 & \tau & 0 & \frac{\sqrt{1-\tau^{2}}}{2} & 0 & \frac{1+\tau}{2}
		\end{pmatrix}
	\end{equation*}
    and the $9\times 9$ identity matrix $I_{9}$,
    \begin{equation*}
    	\rho(\tau,q)=q\rho_{\tau}+\frac{1-q}{9}I_{9}.
    \end{equation*}
    Fig.2 illustrate the lower bounds of $C(\rho(\tau,q))$ for $q=0.995$. The red line is the bound from our Theorem based on the $(8,2)$-POVM with $t=0.01$. The green line is the bound from Corollary 1 based on GSIC POVM with $x=0.04984$. The purple line is the bound from Corollary 2. The orange line is the bound from realignment. It is seen that lower bound from realignment fails to detect the entanglement of $\rho(\tau,0.995)$, while our lower bounds still works in identifying the entanglement.
    \begin{figure}[t]
    	\includegraphics[width=10cm]{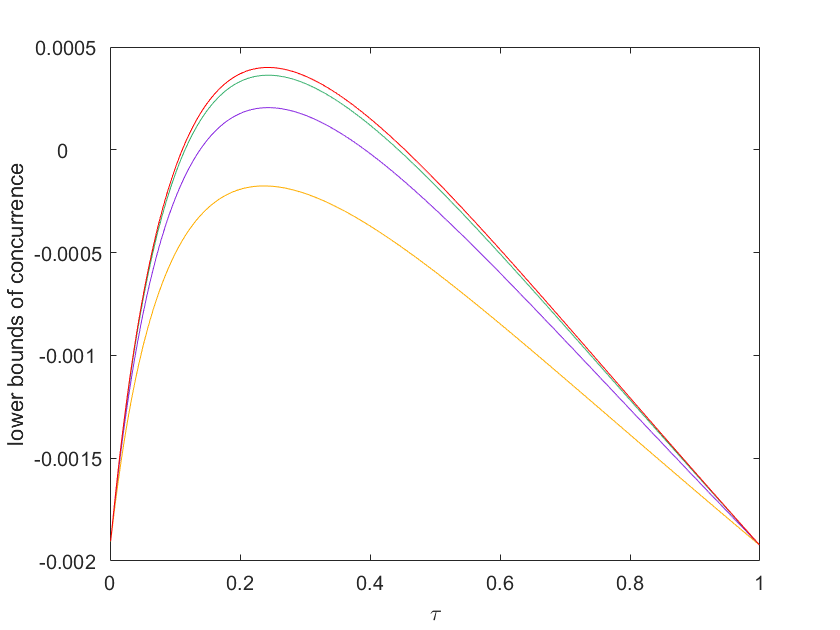}
    	\caption{Lower bounds of $C(\rho(\tau,0.995))$. Red line for the bound from our Theorem, green line for the bound from Corollary 1, purple line for the bound from Corollary 2, orange line for the bound from realignment.}
    \end{figure}
\end{example}

\begin{example}
	Let us consider the following state given in \cite{WOS:000290231700012},
	\begin{equation*}
		\rho = \dfrac{1}{4}{\rm diag}\{q_{1},q_{4},q_{3},q_{2},q_{2},q_{1},q_{4},q_{3},q_{3},
q_{2},q_{1},q_{4},q_{4},q_{3},q_{2},q_{1}\}+\dfrac{q_{1}}{4}\sum\limits_{i,j=1,6,11,16}^{i\neq j}F_{ij},
	\end{equation*}
    where $F_{ij}$ is a matrix with the $(i,j)$ entry $1$ and the rest of the entries $0$, $q_{m}\geq 0$ and $\sum\limits_{m=1}^{4}q_{m}=1$.

We construct $(5,4)$-POVM with the Hermitian basis operators given by the following general Gell-Mann matrices,\\
    $G_{11}=\dfrac{1}{\sqrt{2}}\begin{pmatrix}
    	0 & -\mathrm{i} & 0 & 0 \\
    	\mathrm{i} & 0 & 0 & 0 \\
    	0 & 0 & 0 & 0 \\
    	0 & 0 & 0 & 0
    \end{pmatrix}$,~
    $G_{12}=\dfrac{1}{\sqrt{2}}\begin{pmatrix}
    	0 & 0 & -\mathrm{i} & 0 \\
    	0 & 0 & 0 & 0 \\
    	\mathrm{i} & 0 & 0 & 0 \\
    	0 & 0 & 0 & 0
    \end{pmatrix}$,~
    $G_{13}=\dfrac{1}{\sqrt{2}}\begin{pmatrix}
    	0 & 0 & 0 & -\mathrm{i} \\
    	0 & 0 & 0 & 0 \\
    	0 & 0 & 0 & 0 \\
    	\mathrm{i} & 0 & 0 & 0
    \end{pmatrix}$, \\
    $G_{21}=\dfrac{1}{\sqrt{2}}\begin{pmatrix}
    	0 & 1 & 0 & 0 \\
    	1 & 0 & 0 & 0 \\
    	0 & 0 & 0 & 0 \\
    	0 & 0 & 0 & 0
    \end{pmatrix}$,~
    $G_{22}=\dfrac{1}{\sqrt{2}}\begin{pmatrix}
    	0 & 0 & 0 & 0 \\
    	0 & 0 & -\mathrm{i} & 0 \\
    	0 & \mathrm{i} & 0 & 0 \\
    	0 & 0 & 0 & 0
    \end{pmatrix}$,~
    $G_{23}=\dfrac{1}{\sqrt{2}}\begin{pmatrix}
    	0 & 0 & 0 & 0 \\
    	0 & 0 & 0 & -\mathrm{i} \\
    	0 & 0 & 0 & 0 \\
    	0 & \mathrm{i} & 0 & 0
    \end{pmatrix}$, \\
    $G_{31}=\dfrac{1}{\sqrt{2}}\begin{pmatrix}
    	0 & 0 & 1 & 0 \\
    	0 & 0 & 0 & 0 \\
    	1 & 0 & 0 & 0 \\
    	0 & 0 & 0 & 0
    \end{pmatrix}$,~
    $G_{32}=\dfrac{1}{\sqrt{2}}\begin{pmatrix}
    	0 & 0 & 0 & 0 \\
    	0 & 0 & 1 & 0 \\
    	0 & 1 & 0 & 0 \\
    	0 & 0 & 0 & 0
    \end{pmatrix}$,~
    $G_{33}=\dfrac{1}{\sqrt{2}}\begin{pmatrix}
    	0 & 0 & 0 & 0 \\
    	0 & 0 & 0 & 0 \\
    	0 & 0 & 0 & -\mathrm{i} \\
    	0 & 0 & \mathrm{i} & 0
    \end{pmatrix}$, \\
    $G_{41}=\dfrac{1}{\sqrt{2}}\begin{pmatrix}
    	0 & 0 & 0 & 1 \\
    	0 & 0 & 0 & 0 \\
    	0 & 0 & 0 & 0 \\
    	1 & 0 & 0 & 0
    \end{pmatrix}$,~
    $G_{42}=\dfrac{1}{\sqrt{2}}\begin{pmatrix}
    	0 & 0 & 0 & 0 \\
    	0 & 0 & 0 & 1 \\
    	0 & 0 & 0 & 0 \\
    	0 & 1 & 0 & 0
    \end{pmatrix}$,~
    $G_{43}=\dfrac{1}{\sqrt{2}}\begin{pmatrix}
    	0 & 0 & 0 & 0 \\
    	0 & 0 & 0 & 0 \\
    	0 & 0 & 0 & 1 \\
    	0 & 0 & 1 & 0
    \end{pmatrix}$, \\
    $G_{51}=\dfrac{1}{\sqrt{2}}\begin{pmatrix}
    	1 & 0 & 0 & 0 \\
    	0 & -1 & 0 & 0 \\
    	0 & 0 & 0 & 0 \\
    	0 & 0 & 0 & 0
    \end{pmatrix}$,~
    $G_{52}=\dfrac{1}{\sqrt{6}}\begin{pmatrix}
    	1 & 0 & 0 & 0 \\
    	0 & 1 & 0 & 0 \\
    	0 & 0 & -2 & 0 \\
    	0 & 0 & 0 & 0
    \end{pmatrix}$,~
    $G_{53}=\dfrac{1}{2\sqrt{3}}\begin{pmatrix}
    	1 & 0 & 0 & 0 \\
    	0 & 1 & 0 & 0 \\
    	0 & 0 & 1 & 0 \\
    	0 & 0 & 0 & -3
    \end{pmatrix}$.\\
It is verified that the corresponding parameter $x=\frac{1}{4}+27t^{2}$, $t\in [-0.0572,0.0680]$.

Set $q_{4}=-\frac{1}{3}q_{1}+\frac{1}{2}$, $q_{2}=q_{3}=\frac{1-q_{1}-q_{4}}{2}$ and $t=0.01$. Fig.3 shows the bound from our Theorem (solid red line), the bound from Theorem 2 in \cite{WOS:001237202800008} (dashed bule line) with $z=1$.
    \begin{figure}[t]
    	\includegraphics[width=10cm]{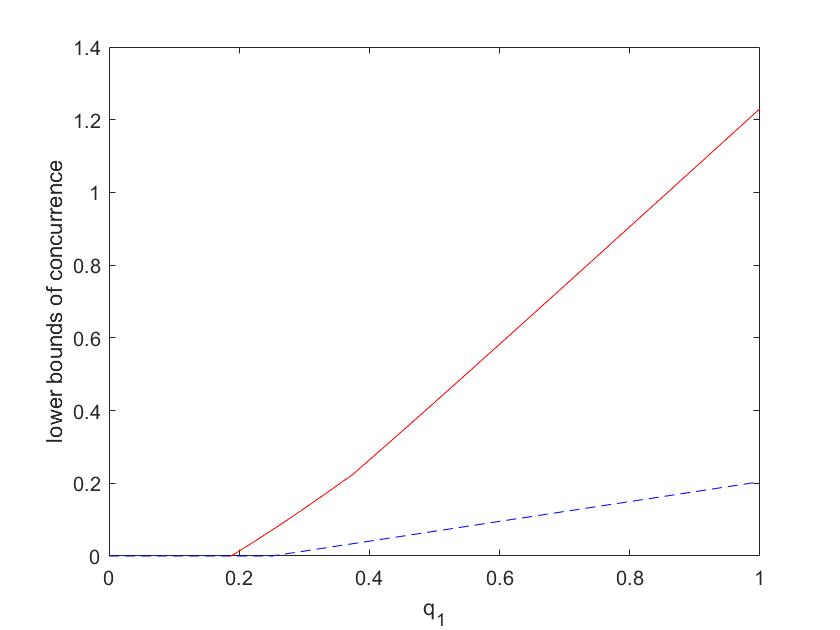}
    	\caption{Lower bounds of $C(\rho)$. Solid red line for the bound from our Theorem, dashed bule line for the bound from Theorem 2 in \cite{WOS:001237202800008} with $z=1$.}
    \end{figure}
\end{example}

\section{Conclusions and discussions}
We have derived a class of lower bounds for concurrence by using symmetric measurements. The
lower bounds are shown to be better than some related ones by detailed examples. More importantly, our bounds can be experimentally implemented. It not necessary to carry out tomography in order to estimate the bounds.

In \cite{WOS:000441330400001} the authors presented the following conjecture: Given a bipartite state $\rho$ in $\pazocal{H}\otimes\pazocal{H}$, if $\|\pazocal{R}(\rho)\|_{\rm tr}>1$, then $\|\pazocal{P}(\rho)\|_{\rm tr}>1$, where $\pazocal{P}(\rho)$ is the particular one defined in our Corollary 2. This conjecture was proved to be true in \cite{WOS:000791874300001}. Based on this conjecture and our theorem, one may conjecture that for any bipartite state $\rho$ in $\pazocal{H}\otimes\pazocal{H}$,
	\begin{equation*}
		\dfrac{M(M-1)}{xM^{2}-d}\left(\|\pazocal{P}(\rho)\|_{\rm tr}-\dfrac{(d-1)(xM^{2}+d^{2})}{dM(M-1)}\right)\geq \|\pazocal{R}(\rho)\|_{\rm tr}-1.
	\end{equation*}
In addition, since the proof of our Theorem mainly relies on the fact that any informationally complete $(N,M)$-POVM is a conical 2-design, it would be possible to obtain new better lower bounds of concurrence based on certain other kinds of conical 2-designs by using the method similar to the proof of our Theorem. We believe that investigating lower bounds of concurrence based on conical 2-designs may be an intriguing endeavor.

\bigskip
\noindent{\bf Acknowlegements} This work is supported by the National Natural Science Foundation of China (NSFC) under Grants 12075159 and 12171044, and the specific research fund of the Innovation Platform for Academicians of Hainan Province.

\end{document}